\documentclass[lettersize,journal,onecolumn]{IEEEtran}
\usepackage{amsmath,amsfonts}
\usepackage{algorithmic}
\usepackage{algorithm}
\usepackage{array}
\usepackage[caption=false,font=normalsize,labelfont=sf,textfont=sf]{subfig}
\usepackage{textcomp}
\usepackage{stfloats}
\usepackage{url}
\usepackage{verbatim}
\usepackage{graphicx}
\usepackage{cite}
\usepackage{hyperref}
\usepackage{xcolor}
\usepackage{siunitx}

\usepackage{comment}

\usepackage{amssymb,amsmath,amsfonts,epsfig,mdframed,rotating,cancel,mathtools}
\usepackage{algorithm}
\usepackage{algorithmic}
\usepackage{textcomp}

\usepackage{xcolor}
\usepackage{tabularx}
\usepackage{multirow}
\usepackage{booktabs}
\usepackage{hyperref}
\usepackage{graphicx}
\usepackage{subcaption}
\usepackage{multirow}
\usepackage{longtable}
\usepackage{bbm}
\usepackage{comment} 

\usepackage[utf8]{inputenc}

\mathcode`l="8000
\begingroup
\makeatletter
\lccode`\~=`\l
\DeclareMathSymbol{\lsb@l}{\mathalpha}{letters}{`l}
\lowercase{\gdef~{\ifnum\the\mathgroup=\m@ne \ell \else \lsb@l \fi}}%
\endgroup

\def\N{{\mathbb N}}

\def\R{{\mathbb R}}

\newfont{\bbb}{msbm10 scaled 500}

\newfont{\bb}{msbm10 scaled 1100}

\newcommand{\av}{{\bf a}}
\newcommand{\bv}{{\bf b}}

\newcommand{\gv}{{\bf g}}

\newcommand{\pv}{{\bf p}}

\newcommand{\uv}{{\bf u}}

\newcommand{\vv}{{\bf v}}
\newcommand{\xv}{{\bf x}}
\newcommand{\yv}{{\bf y}}
\newcommand{\zv}{{\bf z}}

\newcommand{\onev}{{\mathbbm 1}}

\newcommand{\Am}{{\bf A}}
\newcommand{\Bm}{{\bf B}}

\newcommand{\Dm}{{\bf D}}

\newcommand{\Hm}{{\bf H}}
\newcommand{\Id}{{\bf I}}
\newcommand{\Jm}{{\bf J}}

\newcommand{\Lm}{{\bf L}}
\newcommand{\Mm}{{\bf M}}

\newcommand{\Pm}{{\bf P}}

\newcommand{\Wm}{{\bf W}}

\newcommand{\Xm}{{\bf X}}
\newcommand{\Ym}{{\bf Y}}
\newcommand{\Zm}{{\bf Z}}

\newcommand{\Cc}{{\cal C}}
\newcommand{\Dc}{{\cal D}}

\newcommand{\thetav}{\hbox{\boldmath$\theta$}}

\newcommand{\varepsilonv}{\hbox{\boldmath$\varepsilon$}}

\newcommand{\Pim}{\hbox{\boldmath$\Pi$}}

\renewcommand{\l}{(l)}

\newcommand{\diag}{{\hbox{diag}}}

\renewcommand{\arg}{{\hbox{\rm arg}}}

\newcommand{\<}{\left\langle}
\renewcommand{\>}{\right\rangle}

\def\<{\langle}
\def\>{\rangle}

\newcommand{\Diag}{\ensuremath{\operatorname{Diag}}}

\newcommand{\dom}{\ensuremath{\mathrm{dom}\,}}

\usepackage{tikz,tkz-tab}

\usetikzlibrary{fit,automata,positioning,calc,shapes}
\usepackage{ifthen}

\ifdefined\circlenode
\else
\newcommand{\circlenode}[3]{
	\node [draw, circle, inner sep=2pt, #1] (#2) {#3};
}
\fi
\ifdefined\sqnode
\else
\newcommand{\sqnode}[3]{
	\node [draw, rectangle, inner sep=2pt, #1] (#2) {#3};
}
\fi

\usepackage{amsthm} 

\definecolor{darkgreen}{rgb}{0.1,0.5,0}
\definecolor{Midnightblue}{rgb}{.2,.2,.7}
\definecolor{mypink3}{cmyk}{0, 0.7808, 0.4429, 0.1412}

\definecolor{amber}{rgb}{1.0, 0.49, 0.0}

\newcommand \remove[1]{ }

 \newtheorem{definition}{Definition} 
 \newtheorem{proposition}{Proposition}
 
\DeclareMathOperator{\CosSim}{S_C}

\begin{document}

\title{Majorization-Minimization Networks for Inverse Problems:\\ An Application to EEG Imaging}
\author{
  Le~Minh~Triet~Tran, Sarah~Reynaud,
  Ronan~Fablet, Adrien~Merlini, François~Rousseau, and Mai~Quyen~Pham%
  \thanks{
  This work was supported by the European Innovation Council (EIC) through the European Union's Horizon Europe research Programme under Grant 101046748 (Project CEREBRO). This work was granted access to the HPC resources of IDRIS under the allocation 2023-AD010314332 made by GENCI.}%
  \thanks{Le Minh Triet Tran (corresponding author), Sarah Reynaud, and François Rousseau are with \textit{IMT Atlantique}, \textit{LaTIM U1101 INSERM}, Brest, France (emails: \{le-minh-triet.tran, sarah.reynaud, francois.rousseau\}@imt-atlantique.fr).}%
  \thanks{Ronan Fablet, Adrien Merlini, and Mai Quyen Pham are with \textit{IMT Atlantique}, \textit{Lab-STICC UMR CNRS 6285}, Brest, France (emails: \{ronan.fablet, adrien.merlini, mai-quyen.pham\}@imt-atlantique.fr).}%
}

\maketitle

\begin{abstract}
Inverse problems are often ill-posed and require optimization schemes with strong stability and convergence guarantees. While learning-based approaches such as deep unrolling and meta-learning achieve strong empirical performance, they typically lack explicit control over descent and local geometry, particularly the curvature information of the loss, limiting robustness. We propose a learned Majorization-Minimization (MM) framework for inverse problems in a bilevel setting. Instead of learning a full optimizer, we learn a structured curvature majorant that governs each MM step while preserving classical MM descent guarantees. The majorant is parameterized by a lightweight recurrent neural network and explicitly constrained to satisfy valid MM conditions. For cosine-similarity, we derive explicit curvature bounds yielding diagonal majorants, while general objectives can be accommodated via HVP-based curvature estimation. Experiments on EEG source imaging demonstrate improved accuracy, stability, and cross-dataset generalization over deep-unrolled and meta-learning baselines.
\end{abstract}

\begin{IEEEkeywords}
Bilevel Optimization, Learned Majorization-Minimization, Recurrent Neural Networks, EEG Source Imaging.
\end{IEEEkeywords}

\section{Introduction}
\IEEEPARstart{E}{stimating} a phenomenon from indirect or noisy measurements leads to an inverse problem, which is typically ill-posed and requires appropriate priors to ensure stable recovery~\cite{engl1996regularization}. Such problems arise in numerous applications, including telecommunications, geoscience, imaging, and biomedical signal analysis~\cite{Malioutov2005sensorarrays, Fablet2021FourDVarNet, lustig2008compressedMRI, grech2008review}. A particular challenge in realistic environments is the presence of uncontrolled amplitude attenuation, such as fading effects in antenna arrays or skull conductivity attenuation in EEG Source Imaging (ESI)~\cite{Malioutov2005sensorarrays, grech2008review}. In this context, cosine similarity has emerged as a highly effective data-fidelity measure due to its scale-invariant properties. However, the non-convex and non-linear geometry of the cosine loss makes traditional first-order optimizers, such as Gradient Descent, prone to local minima or slow convergence.

To accelerate convergence, second-order methods like Newton's algorithm utilize the inverse Hessian to navigate based on the objective's curvature~\cite{Nocedal2006Numerical}. While mathematically ideal, the $\mathcal{O}(N^3)$ computational cost of Newton-type methods makes them impractical for large-scale high-resolution problems. Quasi-Newton variants like L-BFGS attempt to approximate the Hessian, yet they often struggle to maintain positive definiteness in severely ill-conditioned scenarios. A related yet more general class of approaches is provided by quadratic Majorization-Minimization (MM) methods, which perform updates based on a quadratic surrogate majorant, guaranteeing monotonic descent and rigorous stability~\cite{hunter2000ot}. Nevertheless, the core bottleneck of classical MM remains the design of analytical bounds, which are often overly conservative and limit practical convergence speed.

\IEEEpubidadjcol

In this work, we propose an analytically-constrained framework, termed MM-Net, designed to resolve this bottleneck. Trained via a supervised bilevel optimization scheme, MM-Net abandons hand-crafted static priors in favor of a learned deep prior. Concurrently, we restructure the optimization engine by employing a Recurrent Neural Network (RNN) to learn an adaptive preconditioner. This network approximates second-order curvature information, allowing the algorithm to achieve significantly accelerated convergence. Crucially, this learned component does not operate as a black box. The output of the RNN is projected using the analytical spectral bounds of the cosine similarity loss, which we derive in this paper, ensuring every update remains beneath the analytical ``safety ceiling" of the MM principle. This design seamlessly combines the acceleration power of deep learning with the rigorous descent guarantees of classical optimization.

While MM-Net is inherently domain-agnostic, we deliberately validate it on ESI. Rather than conducting generic multi-domain evaluations, we select ESI because its severely ill-conditioned leadfield matrix $\Lm$ and massive null-space (since sources vastly outnumber sensors) represents a challenging scenario for inverse problems. To demonstrate that MM-Net learns generalizable optimization dynamics rather than dataset-specific priors, we subject it to rigorous stress tests:

\begin{itemize}
    \item \textbf{High-Fidelity In-Domain Resolution:} Validating reliable inversion under both mathematically well-behaved synthetic signals (SEREEGA) and complex biological dynamics (Neural Mass Models)~\cite{sun2022deep}
    \item \textbf{Cross-Paradigm Generalization:} Demonstrating zero-shot transferability between SEREEGA and NMM. This confirms that analytical MM majorants safeguard the solver against severe domain shifts, avoiding the collapse commonly observed in unconstrained networks.
\end{itemize}

The paper is organized as follows. Section~\ref{sec:related} reviews related work. Section~\ref{sec:prelim} formulates the ill-posed inverse problem studied in this paper and recalls the main principles and conditions of quadratic MM. The proposed methodology is presented in Sections~\ref{sec:method} and~\ref{sec:bilevel}. In Section~\ref{sec:method}, we address the construction of valid quadratic MM majorants for cosine-similarity losses with analytical upper bounds. Section~\ref{sec:bilevel} then introduces the bilevel optimization framework based on the proposed MM solver. Section~\ref{sec:appli} describes the application setting and datasets. Section~\ref{sec:resul} reports experimental results on the ESI problem. Finally, Section~\ref{sec:conclu} concludes the paper and outlines directions for future work.

\textbf{Notation}:  
Uppercase bold letters (e.g., $\Lm$) denote matrices, and lowercase bold letters (e.g., $\xv$) denote column vectors. The symbol $\odot$ denotes the Hadamard (entry-wise) product. $\Diag(\vv)$ denotes the diagonal matrix with entries defined from vector $\vv$. 
 $\Dm \succeq 0$ (resp. $\Dm \succ 0$) means that $\Dm \in \R^{n\times n}$ is positive semi-definite, SPD (resp. positive definite, PD), that is, for all $\xv \in \R^n$, $ \xv^\top \Dm \xv \geq 0$  (resp. $ \xv^\top \Dm \xv > 0$). $\onev_{n}$  denotes the vector of ones with length $n$ and $\onev_{m,n} = \onev_m \onev_n^\top$. 

\section{Related works} \label{sec:related}

\paragraph{Direct Mapping vs. Iterative Solvers}
The rise of deep learning has introduced Direct Mapping or End-to-End approaches, where networks are trained to predict the solution directly from measurements in a single forward pass~\cite{Jin2017DCNN, reynaud2024comprehensive}. Despite their fast inference, these black-box models often ignore the forward operator, leading to results that may be inconsistent with the physical measurements and prone to performance degradation under out-of-distribution noise. To enforce physical constraints, the iterative optimization remains a principled standard, as it continuously evaluates data fidelity to ensure the solution adheres to the underlying physics.

\paragraph{Classical Optimization: From First-Order to MM}
Iterative solvers typically rely on first-order algorithms, such as Proximal Gradient Descent (PGD) and Conjugate Gradient (CG), due to their low computational cost and flexibility. However, in complex and ill-conditioned geometries, second-order or Quasi-Newton methods~\cite{Nocedal2006Numerical} are necessary to navigate the curvature for high-precision recovery. Within this landscape, MM offers a principled way to replace difficult objectives with a sequence of tractable surrogates~\cite{hunter2000ot, dempster1977em}. Nevertheless, the integration of scale-invariant metrics like cosine similarity into MM has historically been hindered by a lack of spectral curvature bounds, often resulting in sub-optimal update rules.

\paragraph{PnP Priors and Unrolling}
Building upon first-order frameworks, Plug-and-Play (PnP) methods leverage deep learning by replacing proximal operators in algorithms like PGD with pre-trained denoisers. While frameworks like RED~\cite{romano2017RED} offer theoretical convergence guarantees, they require analytical properties that modern deep networks rarely satisfy, making such guarantees difficult to ensure in practice~\cite{reehorst2018RED2}. To dynamically adapt the optimization loop, unrolling architectures restructure iterative steps into fixed-depth neural networks trained end-to-end. This paradigm has evolved from unrolling classical first-order methods (e.g., MoDL~\cite{Aggarwal2019MoDL}, FISTA-Net~\cite{Xiang2021FISTANet}) to recent MM formulations, such as Unrolled Half-Quadratic (U-HQ) networks~\cite{grarbi2024MMHQ}. However, these unrolled networks lack generalization beyond their finite training horizon and may compromise convergence guarantees by substituting analytical rules with learned parameters, risking divergence in ill-posed scenarios~\cite{monga2021AlgorithmUnrolling}.

\paragraph{Preconditioning, L2O, and Meta-Learning}
Preconditioning is essential for tackling ill-conditioned inverse problems~\cite{Saad2003Iterative, Benzi2002Preconditioning}. While optimizers like Adam~\cite{Kingma2015Adam} approximate diagonal preconditioning using empirical gradient statistics,  they operate independently of the physical curvature of the forward operator. To learn domain-aware preconditioners, the Learning-to-Optimize (L2O) paradigm emerged as a sub-branch of Learning-to-Learn (L2L)~\cite{Li2017L2O, Chen2022L20, Andrychowicz2016LearningToLearn}. Here, RNNs successfully act as data-driven preconditioners. However, lacking analytical constraints, these black-box L2O models risk severe instability. Similarly, Meta-learning frameworks~\cite{Park2019MetaCurvature, Simon2020ModGrad} primarily optimize neural network initialization rather than ensuring physical descent. We resolve this bottleneck by leveraging the memory states of recurrent networks to track optimization trajectories, predicting an adaptive preconditioner strictly regulated by explicit MM bounds to seamlessly bridge L2O with guaranteed physical descent.

\paragraph{Beyond Primal Gradient-Based Domains}
Existing literature also features architectures that operate in dual spaces or altered geometries. Methods such as ADMM-Net~\cite{yang2016deepADMM} and Learned Primal-Dual (LPD)~\cite{adler2018LearnedPrimalDual} utilize operator splitting and primal-dual formulations to handle complex constraints, while Learnable Mirror Descent (LMD)~\cite{Tan2023MirrorDescent} adapts the optimization geometry via Bregman divergences. While effective, these methods optimize in augmented or transformed domains, making primal stability and curvature control harder to characterize theoretically. In contrast, we focus strictly on resolving bottlenecks within the original primal gradient-based domain. By maintaining the original geometry, we isolate and prove the core value of curvature control via analytical majorization, a stability mechanism that space-expanding methods are not primarily designed to address.

\paragraph{Generative Models and the Hallucination Risks}
Recently, generative models such as Diffusion Models and Flow Matching~\cite{Ho2020Diffusion, lipman2023flow} have set new standards for learning complex priors. Methods like Diffusion Posterior Sampling (DPS)~\cite{chung2023DPS} alternate between generative steps and data-consistency projections. Despite their visual quality, a critical drawback is the risk of severe hallucinations. In underdetermined systems like ESI, the forward operator has a massive null-space. Projections only constrain the row-space, leaving the null-space to be freely manipulated by the generative prior~\cite{Kawar2022DDRM}. In contrast, our focus is strictly on deterministic MAP estimation, where MM-Net employs an AutoEncoder (AE) to learn a structural manifold, but critically, it explicitly couples this prior with the analytical MM dynamics. By locking the iterative updates within the MM guardrails, we ensure that every state transition strictly follows the physical gradient, structurally preventing the injection of unverified or hallucinated artifacts into the null-space.

\section{Preliminaries}\label{sec:prelim}
\subsection{Problem Setup}
\label{pbset}
When solving inverse problems, the goal is to estimate an unknown signal $\xv \in \R^s$ from measurements $\yv \in \R^n$ through an observation model:
\begin{equation} \label{eq:invpb}
    \yv = \Lm \xv + \varepsilonv,
\end{equation}
where $\varepsilonv \in \mathbb{R}^{n}$ represents measurement noise or modeling error, and $\Lm \in \mathbb{R}^{n \times s}$ is a linear forward operator. In practice, $\Lm$ is often non-invertible (typically $s > n$) and noise-sensitive, making~\eqref{eq:invpb} an ill-posed problem.

To ensure well-posedness of the analysis, we restrict the measurements to signals with bounded non-zero energy. More precisely, we assume that there exist two constants $\overline{\delta} \ge \underline{\delta} > 0$ such that
$ \yv \in \Cc = \{\yv \in \R^n : \underline{\delta} \le \|\yv\| \le \overline{\delta}\}.$
This assumption is not restrictive in practice, since null measurements do not provide meaningful information for the considered inverse problems and realistic acquisition systems naturally produce signals with non-zero energy.

The unknown signal $\xv$ can be estimated by solving a regularized variational problem of the form:
\begin{equation}\label{eq:optpb}
    \hat{\xv}(\thetav) 
    = \arg\min_{\xv \in \R^s} 
    u(\xv; \yv, \thetav)
    = f \left(\yv, \Lm\xv \right) 
    + \lambda\, (h \circ\Phi)(\xv;\thetav).
\end{equation}

This formulation follows the standard strategy in inverse problems, stabilizing an ill-posed reconstruction using appropriate prior information. To mirror the classical structure, the cost function is decomposed into two components:

\begin{itemize}
    \item \textbf{Data fidelity}\\
    The function $f: \R^n \times \R^n \to \R$ measures the discrepancy between the measurements  $\yv$ and their reconstruction $\Lm\xv$. We assume that  $f$ is twice differentiable with a $\mu_f$-Lipschitz gradient.\footnote{A twice differentiable function $f$ is a $\mu$-Lipschitz gradient if there exists a constant  $\mu< +\infty$ such that $\|\nabla f(\xv)- \nabla f(\yv)\| \leq \mu \|\xv -\yv\|,\; \forall  \xv, \yv \in \R^n.$ }
    \item \textbf{Regularization}\\
    The prior is encoded through the composition $h \circ\Phi$ where \\
    $\Phi: \R^s \to \R^m$ is a (possibly nonlinear) representation operator, twice differentiable and parameterized by $\thetav$, and\\
    $h: \R^m \to \R$ is a differentiable function promoting desirable properties in the recovered sources. We assume that h has a $\mu_h$-Lipschitz continuous gradient.
    \item \textbf{Trade-off parameter}\\
    The scalar $\lambda > 0$ balances data fidelity against the regularization term.
\end{itemize}

This framework captures a broad class of regularized inverse problems, where the choices of $f$, the representation $\Phi$, the prior $h$, and the regularization weight $\lambda$ crucially influence the recovered solution. As widely documented \cite{Vogel2002, arridge2019SolvingInverseProblems, AntilDiKhatri2020}, selecting these components, particularly designing an effective representation $\Phi$ or tuning $\lambda$, remains a central challenge, as these choices are highly problem-dependent and sensitive to both the data and the application at hand.

We propose computing $\hat{\xv}(\thetav)$ via a solver that generates a sequence $\hat{\xv}^0(\thetav), \hat{\xv}^1(\thetav), \ldots$ grounded in the MM strategy.

\subsection{Quadratic MM and Majorant Condition}

MM methods solve a minimization problem by iteratively replacing a difficult-to-optimize function with a simpler upper-bounding surrogate. In our context, we construct a quadratic majorant for the objective in~\eqref{eq:optpb}. We recall the fundamental definition below.

\begin{definition}
Let $\xi: \R^s \to \R$ be a differentiable function, and let $\overline{\xv} \in \R^s$ be a reference point. For any $\xv \in \R^s$, define the quadratic surrogate:
\begin{equation}
    Q(\xv, \overline{\xv})
    = \xi(\overline{\xv})
    + (\xv - \overline{\xv})^\top \nabla \xi(\overline{\xv})
    + \frac{1}{2} (\xv - \overline{\xv})^\top
      \Pm_{\overline{\xv}}
      (\xv - \overline{\xv}),
\end{equation}
where $\Pm_{\overline{\xv}} \in \R^{s \times s}$ is a symmetric positive definite (SPD) matrix. The matrix $\Pm_{\overline{\xv}}$ is said to satisfy the \emph{majorant condition} at $\overline{\xv}$ if
\begin{equation}
    \xi(\xv) \leq Q(\xv, \overline{\xv}), \qquad \forall\, \xv \in \R^s.
\end{equation}
In this case, $Q(\cdot, \overline{\xv})$ is a quadratic majorant of $\xi$ at~$\overline{\xv}$.
\end{definition}

Since $\Pm_{\overline{\xv}}$ is SPD, minimizing the quadratic surrogate yields a closed-form update:
\begin{equation} \label{eq:MM_update_rule}
    \hat{\xv} 
    = \overline{\xv} 
    - \Pm_{\overline{\xv}}^{-1} \nabla \xi(\overline{\xv}).
\end{equation}

A key benefit of MM is that a well-designed majorant can provide a larger and safer descent step than standard gradient descent. However, using a dense SPD matrix requires computing $\Pm_{\overline{\xv}}^{-1}$, which may be computationally expensive.

To balance computational efficiency and convergence guarantees (i.e. ensuring $\Pm_{\overline{\xv}}$ closely upper-bounds the Hessian matrix), we aim to construct a \emph{diagonal} majorant matrix, which allows for fast inversion while satisfying the majorization condition. Let $\pv \in \mathbb{R}^s$ denote the vector of inverse diagonal curvature coefficients such that $\Pm_{\overline{\xv}}^{-1} = \operatorname{Diag}(\pv)$. The MM update reduces to an element-wise scaling of the gradient:
\begin{equation}\label{eq:MM_diag_update}
    \hat{\xv} = \overline{\xv} - \pv \odot \nabla \xi(\overline{\xv}).
\end{equation}

\section{Learning with Analytical MM Guardrails}\label{sec:method}
Gradient-Lipschitz bounds for classical loss functions, such as the squared $\ell_2$ (MSE) loss, the $\ell_1$ loss associated with Laplace noise, or the KL divergence, are well established in the optimization literature \cite{boyd2004convex,nesterov2004introductory,pham2025secondorder}. In contrast, to our knowledge, no standard Lipschitz gradient bounds are available for the cosine similarity, despite its increasing importance in applications where scale and amplitude invariance are desirable, such as ESI. This limitation becomes problematic when cosine similarity is incorporated into iterative optimization schemes, for example in MM, since explicit curvature bounds are required to construct a valid quadratic majorant. 

To concretize the objective formulated in Eq.~\eqref{eq:optpb}, we define both the data-fidelity term $f$ and the regularization term $h$ using the cosine distance. Given two vectors $\av, \bv \in {\R^*}^m$, the cosine similarity is defined as:
\begin{equation}
    \CosSim(\av, \bv) = \frac{\av^\top \bv}{\| \av\| \| \bv \|} \, .
\end{equation}

Consequently, the objective function to be minimized at each step is explicitly given by:
\begin{equation}\label{eq:cosine_loss}
    u(\xv; \yv, \thetav) 
    = \left( 1 - \CosSim(\yv, \Lm\xv) \right) 
    + \lambda \left( 1 - \CosSim(\xv, \Phi(\xv;\thetav)) \right).
\end{equation}

As mentioned above, to ensure numerical stability, we consider the following mild constraint: $\Dc = \{\xv \in \R^s:\; \|\xv\| \ge \upsilon\}$ for a small $\upsilon>0$. In ESI, this is physically justified: the brain perpetually exhibits spontaneous background activity (even in a resting state), which, combined with unavoidable instrumental noise, establishes a non-vanishing floor in the recorded signal. Consequently, practical iterates never converge to zero, and the $\upsilon$-constraint formalizes a condition that naturally holds in all realistic scenarios.

Under this assumption, together with the bounded energy of the observation data, we show that the cosine similarity loss has a Lipschitz continuous gradient on any bounded convex subset of $\Dc$. Beyond merely ensuring the existence of a majorant, this property allows us to derive a state-dependent analytical bound that serves as an operational ``safety ceiling'' for the solver. By projecting the output of RNN into the feasible region defined by this bound, we ensure that every accelerated step remains strictly within the region of monotonic descent. The following Proposition establishes the conditions and the explicit curvature formula for our MM framework.

\begin{proposition}\label{pro:mm}
For every $\xv(\thetav) \in \R^s$, $\yv\in \R^n$, and $\Lm \in \R^{n\times s}$, define the vector $\mathbf{p}(\xv(\thetav)) \in \R^s$ bounded by:
\begin{equation}\label{eq:propo1}
    \underline{\nu} \mathbf{1} \preceq \mathbf{p}(\xv(\thetav)) \preceq  \frac{1}{\mu_{\text{data}} + \lambda \mu_{\text{prior}}} \mathbf{1},
\end{equation}
where $\underline{\nu}>0$ is a small scalar ensuring positive definiteness, and the curvature components $\mu_{\text{data}}$ and $\mu_{\text{prior}}$ are defined as:
\begin{align*}
    \mu_{\text{data}} &= \frac{2 \|\Lm\|_2^2}{\sqrt{3}\|\Lm\xv\|^2},\\
    \mu_{\text{prior}} &= \left( \frac{2 + \sqrt{3} \beta}{\sqrt{3}} \frac{1}{\|\xv\|^2} \right) \\
    &\quad+ \left( \prod_{k=1}^{N} \|\Wm^k\|_2^2 \right) \left( \frac{2 \beta + \sqrt{3}}{\sqrt{3} \beta} \frac{1}{\|\Phi(\xv;\thetav)\|^2} \right)\\
    &\quad+ \left( \sum_{l=1}^{N-1} \left( \prod_{k=1}^{l-1} \|\Wm^k\|_2^2 \right) \big\| \vv_{res}^l \odot \sigma''(\zv^l) \big\|_\infty \right) \Bigg.
\end{align*}
\begin{align*}
    \text{with } &\beta > 0 \text{ is a tightness parameter},\\
    &\Wm^k \text{ denoting the weight matrix of the } k\text{-th layer}, \\
    &\zv^l \text{ as the pre-activation vectors at layer } l,\\
    &\vv_{res}^l \text{ as the residual vectors at layer } l.
\end{align*}
If $\mathbf{p}(\xv(\thetav))$ satisfies~\eqref{eq:propo1}, then it fulfills the rigorous majorization condition for the loss function defined in~\eqref{eq:cosine_loss}.
\end{proposition}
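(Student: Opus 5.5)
The overall approach is to use the standard descent-lemma criterion for quadratic MM. If $u(\cdot;\yv,\thetav)$ has a $\mu$-Lipschitz gradient on a convex region containing the current point and the candidate update, then any $\Pm_{\overline{\xv}} \succeq \mu \Id$ satisfies the majorant condition there. We have $\Pm_{\overline{\xv}}^{-1}=\Diag(\pv)$, and the entries of $\pv$ lie in $[\underline{\nu}, 1/(\mu_{\text{data}}+\lambda\mu_{\text{prior}})]$. Hence $\Pm_{\overline{\xv}} \succeq (\mu_{\text{data}}+\lambda\mu_{\text{prior}})\Id$, and the lower bound $\underline{\nu}>0$ keeps it SPD. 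Since Lipschitz constants add, and the term $\lambda$ scales linearly, it suffices to show two things separately on a bounded convex subset of $\Dc$ (with $\yv\in\Cc$):
\begin{itemize}
\item the data term $1-\CosSim(\yv,\Lm\xv)$ has a gradient that is $\mu_{\text{data}}$-Lipschitz;
\item the prior term $1-\CosSim(\xv,\Phi(\xv;\thetav))$ has a gradient that is $\mu_{\text{prior}}$-Lipschitz.
\end{itemize}
Equivalently, the spectral norm of each Hessian is bounded by the corresponding constant. The $\|\xv\|$, $\|\Lm\xv\|$ and $\|\Phi\|$ appearing in the constants are read as the local (worst-case over the region) quantities. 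This is where the assumptions $\|\xv\|\ge\upsilon$ and $\|\yv\|\ge\underline{\delta}$ enter.

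\textbf{Step 1: spectral bound for $g(\av)=1-\CosSim(\yv,\av)$.} Write $\hat{\yv}=\yv/\|\yv\|$ and $c=\hat{\yv}^\top\av/\|\av\|\in[-1,1]$. A direct computation gives
\[
\nabla^2 g(\av)=\frac{1}{\|\av\|^2}\Big(c\,\Id + \hat{\yv}\hat{\av}^\top+\hat{\av}\hat{\yv}^\top - 3c\,\hat{\av}\hat{\av}^\top\Big), \qquad \hat{\av}=\av/\|\av\|.
\]
This matrix acts nontrivially only on $\mathrm{span}(\hat{\av},\hat{\yv})$, and equals $c/\|\av\|^2$ on the orthogonal complement. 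On the two-dimensional span I parametrize $c=\cos\phi$ and compute the eigenvalues of the resulting $2\times2$ matrix explicitly. Maximizing their absolute value over $\phi$ should give $2/\sqrt3$, attained at an interior critical point where $\tan$-type relations produce $\sqrt3$. Then, by the chain rule, $\nabla^2_{\xv}=\Lm^\top\nabla^2 g(\Lm\xv)\Lm$, which yields $\mu_{\text{data}}=2\|\Lm\|_2^2/(\sqrt3\|\Lm\xv\|^2)$. The Hessian-versus-Lipschitz passage then uses the mean-value theorem along segments in the convex region.

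\textbf{Step 2: the prior term.} Set $G(\xv)=\tilde g(\xv,\Phi(\xv))$ with $\tilde g(\av,\bv)=1-\CosSim(\av,\bv)$. Its Hessian decomposes as
\[
\nabla^2 G=\nabla^2_{\av\av}\tilde g + \Jm^\top\nabla^2_{\bv\av}\tilde g+\nabla^2_{\av\bv}\tilde g\,\Jm+\Jm^\top\nabla^2_{\bv\bv}\tilde g\,\Jm+\sum_i \partial_{b_i}\tilde g\,\nabla^2\Phi_i,
\]
where $\Jm$ is the Jacobian of $\Phi$. For the diagonal blocks, Step 1 gives $\|\nabla^2_{\av\av}\tilde g\|\le 2/(\sqrt3\|\xv\|^2)$ and $\|\nabla^2_{\bv\bv}\tilde g\|\le 2/(\sqrt3\|\Phi\|^2)$. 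The cross block is bounded by a constant $C/(\|\xv\|\|\Phi\|)$. To split it I use Young's inequality $\frac{1}{\|\xv\|\|\Phi\|}\le \frac{\beta}{2\|\xv\|^2}+\frac{1}{2\beta\|\Phi\|^2}$, which is what introduces $\beta$. This yields coefficients $(2+\sqrt3\beta)/\sqrt3$ and $(2\beta+\sqrt3)/(\sqrt3\beta)$, after bounding $\|\Jm\|_2\le\prod_k\|\Wm^k\|_2$ for a network with $1$-Lipschitz activations, and absorbing $\|\Jm\|\le\|\Jm\|^2$ where needed (or assuming the product is $\ge 1$). For the last, curvature-of-$\Phi$ term, I backpropagate the residual $\vv_{res}^l$, i.e.\ the gradient of $\tilde g$ pulled back to layer $l$. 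Differentiating the layerwise Jacobian product term by term gives a sum over layers of $\Diag(\vv_{res}^l\odot\sigma''(\zv^l))$ sandwiched between Jacobians of earlier layers. This is bounded by $\big(\prod_{k<l}\|\Wm^k\|_2^2\big)\|\vv_{res}^l\odot\sigma''(\zv^l)\|_\infty$.

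\textbf{Main obstacle.} I expect two difficulties. The first is the exact $2\times2$ eigenvalue optimization giving $2/\sqrt3$, including the cross block constant, since it must combine cleanly with Young's inequality to produce the stated coefficients. The second is the careful bookkeeping of the second-order network term. I would handle the first by working in an orthonormal basis $\{\hat{\av},\hat{\av}^\perp\}$ with $\hat{\yv}=\cos\phi\,\hat{\av}+\sin\phi\,\hat{\av}^\perp$ and maximizing in closed form. Finally, since the bounds are state-dependent, I would state explicitly that they are applied on a bounded convex neighbourhood in $\Dc$ containing the step. Monotone descent of \eqref{eq:MM_diag_update} then follows from the descent lemma.
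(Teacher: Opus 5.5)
Your proof follows the paper's route. The shared steps are:
\begin{itemize}
\item the core Hessian $\|\av\|^{-2}\Mm(\uv,\vv)$, reduced to a $2\times 2$ block on $\mathrm{span}\{\uv,\vv\}$ whose spectral norm peaks at $2/\sqrt{3}$ when $c=\pm 1/\sqrt{3}$;
\item the pull-back $\Lm^\top(\cdot)\Lm$ for the data term;
\item the four-block split of the prior Hessian;
\item the layerwise residual bound built from $\vv_{res}^l\odot\sigma''(\zv^l)$ and $\prod_{k<l}\|\Wm^k\|_2^2$.
\end{itemize}

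\textbf{Cross-term bookkeeping.} The cross block is $\Pim_\xv\Pim_\Phi$ up to sign, so $C=1$, and the two cross terms have norm at most $2\|\Jm_\Phi\|_2/(\|\xv\|\,\|\Phi\|)$. Your split $\frac{1}{\|\xv\|\|\Phi\|}\le\frac{\beta}{2\|\xv\|^2}+\frac{1}{2\beta\|\Phi\|^2}$ leaves two pieces, $\|\Jm_\Phi\|_2\,\beta/\|\xv\|^2$ and $\|\Jm_\Phi\|_2/(\beta\|\Phi\|^2)$. The first fits under the stated $\beta/\|\xv\|^2$ only if $\|\Jm_\Phi\|_2\le 1$. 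Your absorption in the second needs $\|\Jm_\Phi\|_2\ge 1$, or $\prod_k\|\Wm^k\|_2\ge 1$. So the two requirements conflict in general.

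The fix is to apply Young to $a=1/\|\xv\|$ and $b=\|\Jm_\Phi\|_2/\|\Phi\|$, that is, $2ab\le\beta a^2+b^2/\beta$. This gives exactly $\frac{2+\sqrt{3}\beta}{\sqrt{3}\|\xv\|^2}+\|\Jm_\Phi\|_2^2\,\frac{2\beta+\sqrt{3}}{\sqrt{3}\beta\|\Phi\|^2}$, with no condition on $\|\Jm_\Phi\|_2$. It is the scalar version of the paper's matrix inequality $\Am^\top\Bm+\Bm^\top\Am\preceq\beta\Am^\top\Am+\beta^{-1}\Bm^\top\Bm$, with $\Am=\Pim_\xv$, $\Bm=\Pim_\Phi\Jm_\Phi$ and $\Pim_\xv^2=\Pim_\xv/\|\xv\|$. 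The matrix form additionally yields the tighter non-diagonal majorant $\Bm_{GN}$.

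\textbf{Localization.} You read $\|\xv\|$, $\|\Lm\xv\|$ and $\|\Phi\|$ as worst-case values over a convex region; $\|\vv_{res}^l\odot\sigma''(\zv^l)\|_\infty$ must be read the same way. This is a genuine change of the statement, which evaluates these quantities at the current iterate and asks for $\xi\le Q(\cdot,\overline{\xv})$ on all of $\R^s$. It is, however, a necessary change, because a curvature bound at $\overline{\xv}$ alone does not give a global majorant. For example, take $\Lm=\Id$, $\yv=\overline{\xv}=\mathbf{e}_1$ and $\xv=\tfrac12\mathbf{e}_2$. The data term equals $1$, while its quadratic with $\mu_{\text{data}}=2/\sqrt{3}$ (zero gradient at $\overline{\xv}$) equals $\tfrac{5}{4\sqrt{3}}<1$. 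Hence, for any fixed $\Phi$ and small $\lambda$, the literal statement fails.

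The paper's appendix mentions a bounded convex subset but then inserts current-iterate values. So it only controls $\nabla^2 u(\overline{\xv})$ and does not settle this point either. You should state your localized version explicitly and choose the region with care:
\begin{itemize}
\item It should stay away from $\ker\Lm$. That, not $\|\xv\|\ge\upsilon$, is what bounds $\|\Lm\xv\|$ below when $s>n$; this is why the paper resorts to a Tikhonov $\epsilon$. Since $0\in\ker\Lm$, avoiding $\ker\Lm$ also bounds $\|\xv\|$ below.
\item It needs $\inf\|\Phi\|>0$ on the region.
\item The assumption $\|\yv\|\ge\underline{\delta}$ plays no role, because the Hessian in $\av$ depends on $\yv$ only through its direction.
\end{itemize}
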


\begin{proof}
    The detailed derivation of the Hessian spectral bounds and the full proof of this Proposition are provided in Appendices~\ref{app:cossim_bounds} and~\ref{app:phi_residual}.
\end{proof}

\section{The Proposed MM-Net Framework}\label{sec:bilevel}
\subsection{Truncated Bilevel Formulation}

In this work, drawing upon the architectural principles in~\cite{Fablet2021FourDVarNet}, we leverage an AE $\Phi$ to capture the underlying structural manifold of the target signals. By embedding this learned representation into the regularization term, we cast the joint optimization of the structural signal priors and the solver's dynamics as a supervised bilevel optimization problem~\cite{crockett2022bilevel,AntilDiKhatri2020}. The formulation is as follows:
\begin{subequations} \label{eq: bilevel}
    \begin{align}
        \hat{\thetav} &= \arg\min_{\thetav} \mathbb{E}_{(\xv_{\text{gt}}, \yv)}[\ell(\xv_{\text{gt}}, \hat{\xv}(\thetav))] \label{bilevel:upper} \\
        \text{s.t. } \hat{\xv}(\thetav) &= \arg\min_{\xv \in \Dc} u\left(\xv; \yv, \thetav \right), \label{bilevel:lower}
    \end{align}
\end{subequations}

where $(\xv_{\text{gt}},\yv)$ be a set of training data which satisfies \eqref{eq:invpb}. The lower-level problem~\eqref{bilevel:lower} estimates the signal $\hat{\xv}(\thetav)$ by minimizing the variational energy $u$ (defined in Eq.~\eqref{eq:cosine_loss}), and the upper-level problem~\eqref{bilevel:upper} updates the network parameters $\thetav$ by minimizing a meta-loss $\ell$. To align training with our directional geometry, we define the meta-loss $\ell$ as:
\begin{equation}\label{eq:meta_loss}
    \ell(\xv_{\text{gt}}, \hat{\xv}) = \Big( 1 - \CosSim(\xv_{\text{gt}}, \hat{\xv}) \Big) + \Big(1 - \CosSim(\hat{\xv}, \Phi(\hat{\xv};\thetav)) \Big).
\end{equation}

The first term ensures directional alignment with the ground truth $\xv_{\text{gt}}$. The second term, acting on the output of the solver $\hat{\xv}$, forces the estimate to reside strictly on the learned structural manifold. This self-consistency constraint ensures that the prior $\Phi$ remains active and effective even when the ground truth is unavailable, bridging the gap between supervised learning and variational stability.

The convergence of the bilevel procedure during training follows from results in nonconvex optimization with inexact inner solvers. As established in \cite{franceschi2020bilevel}, using a truncated gradient-based inner solver with a fixed number of iterations $I$ still ensures convergence to a stationary point of the upper-level objective. While such truncated schemes often lack guarantees beyond their training horizon~\cite{monga2021AlgorithmUnrolling}, our approach ensures that the learned solver remains physically consistent and stable for any number of iterations. By anchoring the updates within the analytical MM bounds derived in Section \ref{sec:method}, we guarantee monotonic descent in the lower-level objective during both training and inference, regardless of the truncation limit $I$.

\begin{algorithm}
    \caption{Truncated Bilevel Training for MM-Net} \label{algo:BL}
    \textbf{Input:} training set $(\xv_{\text{gt}}, \yv, \Lm)$, let $ I, J \in \N^* $, $\gamma \in (0, 2)$, $\eta > 0$.
    \textbf{Initialize:} $\xv^{0} \in \dom u$\footnotemark and $\thetav^0 \in \dom \ell$\\
    \textbf{Output:} $\thetav^{J}$
    \begin{algorithmic}[1]
    \FOR{$j = 0$ \TO $J-1$}
        \FOR{$i = 0$ \TO $I-1$} \STATE \COMMENT{\textit{Inner: Truncated Solver}}
            \STATE $\gv^i = \nabla_{\xv} u(\xv^i; \yv, \thetav^j)$
            \STATE $\pv^i = \operatorname{proj}_{[\underline{\nu}, \, \overline{\nu}^i]} \big(\widetilde{\pv}(\xv^i,\gv^i; \thetav^j) \big)$
            \STATE $\xv^{i+1} = \xv^i - \gamma \pv^i \odot \gv^i$
        \ENDFOR
        \STATE $\hat{\xv} = \xv^I(\thetav^j)$
        \STATE \COMMENT{\textit{Outer: Meta-Loss Update}}
        \STATE $\ell^j = (1 - \CosSim(\xv_{\text{gt}}, \hat{\xv})) + (1 - \CosSim(\hat{\xv}, \Phi(\hat{\xv}; \thetav^j)))$
        \STATE $\thetav^{j+1} = \operatorname{Update}\big(\thetav^j, \nabla_{\thetav} \ell^j, \eta\big)$
    \ENDFOR
    \end{algorithmic}
\end{algorithm}
\footnotetext{$\dom u = \{ \xv: u(\xv) < + \infty\}$}

\subsection{MM-Net Architecture}\label{sec:archi_bilevel}
Motivated by the limitations of fixed, hand-crafted majorants, our objective is to learn an adaptive curvature model that provably preserves the monotonic descent condition. At iteration $i$, let $\gv^i=\nabla_{\xv} u(\xv^i;\yv,\thetav^j)$ denote the gradient of the variational energy computed via automatic differentiation. Instead of relying on a scalar step-size, we predict an unconstrained diagonal inverse curvature vector $\widetilde{\pv}^i \in \R^s$ using a dual-recurrent architecture:
\begin{equation}\label{eq:P_arch}
    \widetilde{\pv}(\xv^i,\gv^i) = \mathcal{F}_1(\gv^i;\theta_1)\odot \mathcal{F}_2\!\big(\xv^i,\mathcal{F}_1(\gv^i;\theta_1);\theta_2\big).
\end{equation}

Crucially, this design is not a purely empirical heuristic, it is structurally inspired by the analytical majorants derived in Proposition~\ref{pro:mm}. Mathematical curvature bounds typically decouple into constant or first-order operator properties and state-dependent scalings. Mirroring this mathematical dynamic, $\mathcal{F}_1$ acts as a gradient-driven feature extractor, while $\mathcal{F}_2$ acts as a state-dependent modulator. We instantiate both modules using ConvLSTM cells~\cite{Shi2015ConvLSTM} to effectively exploit spatio-temporal memory across the optimization trajectory.

To convert this learned prediction into a mathematically valid preconditioner, $\widetilde{\pv}^i$ must be strictly projected into an analytical safety interval:
\begin{equation}\label{eq:P_clip}
    \mathbf{p}^i = \operatorname{proj}_{[\underline{\nu}, \, \overline{\nu}^i]}\!\left(\widetilde{\mathbf{p}}^{\,i}\right),
\end{equation}

where $\operatorname{proj}_{[\underline{\nu}, \overline{\nu}^i]}$ denotes the element-wise Euclidean projection onto the bounding interval. The lower bound $\underline{\nu} > 0$ ensures the strictly positive definiteness of the majorant. The upper bound $\overline{\nu}^i$ acts as the operational guardrail, dynamically populated using the explicit curvature formula established in~\eqref{eq:propo1}. For broader applicability to general differentiable losses beyond cosine similarity, $\overline{\nu}^i$ can be alternatively bounded by $1 / \hat{\lambda}_{\max}$, where $\hat{\lambda}_{\max}$ is the dominant Hessian eigenvalue estimated via standard Power Iterations~\cite{pearlmutter1994FastExactMultiplicationbytheHessian, golub2013MatrixComputations}. Consequently, the projected vector $\pv^i$ induces a valid diagonal quadratic majorant at the current state $\xv^i$, allowing the relaxed descent step to be safely executed via~\eqref{eq:MM_diag_update}. This completes the Truncated Bilevel Training procedure summarized in Algorithm~\ref{algo:BL}.

\begin{figure}
    \centering
    \includegraphics[width=\linewidth]{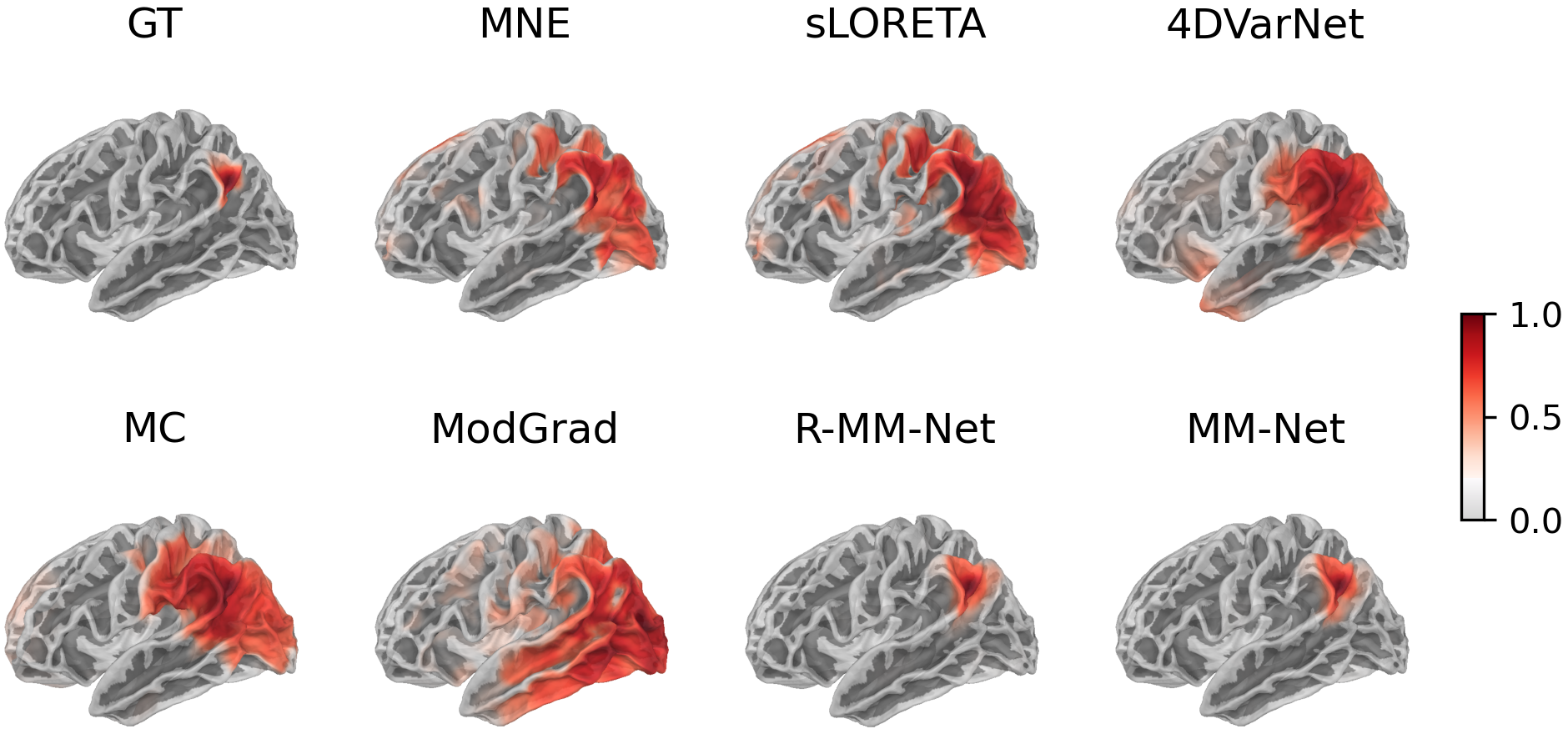}
    \caption{Qualitative reconstructions for a representative sample (SEREEGA). While baselines suffer from severe spatial leakage and spurious artifacts, MM-Net achieves a highly focal reconstruction that closely matches the Ground Truth (GT).}
    \label{fig:cortex}
\end{figure}

\begin{figure}
    \centering
    \includegraphics[width=\linewidth]{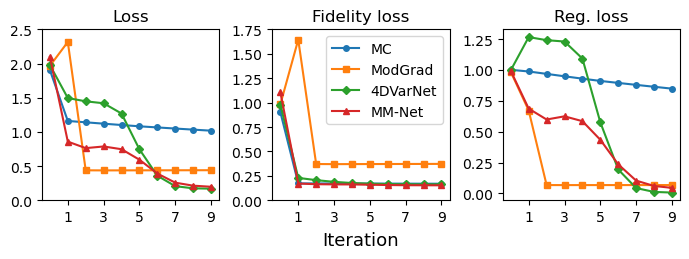}
    \caption{Evolution of the lower-level cost function on the validation set: total loss (loss), data-fidelity loss (fidelity loss), and regularization loss (reg. loss).}
    \label{fig:lw_lv_cost}
\end{figure}

\begin{table*}[t]
    \caption{
        Quantitative comparison of source reconstruction performance. Methods are categorized into Classical, Within-domain, and Cross-domain learning. Best results for each test set are highlighted in bold.
    }\label{tab:dataset_results}
    \centering
    \renewcommand{\arraystretch}{1.05}
    \begin{tabular}{lllcccccc}
    \hline 
    \textbf{Test Set} & \textbf{Type} & \textbf{Method} & \textbf{Params (M)} & \textbf{LE} $\downarrow$ & \textbf{AUC $\uparrow$} & \textbf{nMSE $\downarrow$} & \textbf{PSNR $\uparrow$} & \textbf{TE $\downarrow$} \\
    
    \hline
    \multirow{10}{*}{SEREEGA}
    & \multirow{2}{*}{Classical} & MNE & - & 21.06 & 82.05 & 0.02 & 21.85 & 2.72 \\
    & & sLORETA & - & 14.82 & 86.75 & 0.05 & 18.29 & 1.57 \\
    \cmidrule{2-9}
    
    & \multirow{5}{*}{Within} & Meta-Curvature & 1.80 & 10.22 & 97.86 & 0.01 & 27.90 & \textbf{0.72} \\
    & & ModGrad & 1.34 & 28.68 & 69.12 & 0.06 & 21.54 & 1.60 \\
    & & 4DVarNet & 3.01 & 9.17 & 99.52 & 0.01 & 30.35 & 1.04 \\
    & & R-MM-Net & 2.95 & 4.07 & 99.86 & 0.00 & 37.81 & 0.85 \\
    & & \textbf{MM-Net} & 2.95 & \textbf{3.92} & \textbf{99.90} & \textbf{0.00} & \textbf{38.74} & 0.79 \\
    \cmidrule{2-9}

    & \multirow{3}{*}{\shortstack[l]{Cross \\ \scriptsize (Train: NMM)}} & 4DVarNet & 3.01 & 20.68 & 92.74 & 0.02 & 22.99 & 1.62 \\
    & & R-MM-Net & 2.95 & 17.88 & 92.67 & 0.01 & 25.13 & 1.55 \\
    & & \textbf{MM-Net} & 2.95 & \textbf{16.74} & \textbf{92.87} & \textbf{0.01} & \textbf{25.64} & \textbf{1.47} \\

    \midrule
    \multirow{10}{*}{NMM}
    & \multirow{2}{*}{Classical} & MNE & - & 19.15 & 77.36 & 0.02 & 29.81 & 185.00 \\
    & & sLORETA & - & 14.67 & 79.18 & 0.04 & 26.46 & 165.36 \\
    \cmidrule{2-9}

    & \multirow{5}{*}{Within} & Meta-Curvature & 2.05 & 17.58 & \textbf{84.08} & 0.02 & 30.39 & 170.12 \\
    & & ModGrad & 7.11 & 38.03 & 64.32 & 0.02 & 29.15 & 196.32 \\
    & & 4DVarNet & 3.01 & 15.13 & 80.32 & 0.01 & 32.68 & 166.24 \\
    & & R-MM-Net & 2.95 & 10.33 & 81.14 & 0.01 & 35.72 & \textbf{142.56} \\
    & & \textbf{MM-Net} & 2.95 & \textbf{9.98} & 82.41 & \textbf{0.01} & \textbf{35.81} & 146.24 \\
    \cmidrule{2-9}

    & \multirow{3}{*}{\shortstack[l]{Cross \\ \scriptsize (Train: SEREEGA)}} & 4DVarNet & 3.01 & 14.40 & 74.97 & 0.02 & 30.23 & 156.44 \\
    & & R-MM-Net & 2.95 & 14.20 & 74.47 & 0.01 & 33.25 & 138.78 \\
    & & \textbf{MM-Net} & 2.95 & \textbf{12.48} & \textbf{75.03} & \textbf{0.01} & \textbf{33.72} & \textbf{133.44} \\
    \hline
    \end{tabular}
\end{table*}

\section{Application: ESI problem}\label{sec:appli}
\subsection{Problem: EEG source imaging (ESI)}

As an application for the method proposed in this paper, we focus on the inverse problem appearing in ESI. Electroencephalography (EEG) is a technique in which the electric potentials caused by the electrical activity of the brain are measured at the scalp using an array of electrodes. The relation between the scalp measurements and the brain activity is typically modeled as in Eq.~\eqref{eq:invpb} where $\Ym \in \mathbb{R}^{n \times t}$ contains the potential readings at the $n$ electrodes and $t$ instants, $\Xm \in \mathbb{R}^{s \times t}$ contains the coefficients of the $s$ electric dipoles that model different cluster of neurons at each instant, and $\varepsilon\in \mathbb{R}^{n \times t}$ represents the measurement noise or any other distortion that may impact the EEG signals. The leadfield matrix $\Lm \in \mathbb{R}^{n \times s}$ related the relationship between the $s$ source current dipoles (typically placed on the cortex) and the electric potentials they generate at the $n$ electrodes. It is computed using numerical methods such as the finite element method (FEM) or the boundary element method (BEM) \cite{kybic2005common}. Recovering the source activity from the EEG measurements is an ill-posed inverse problem due to the dimensions and ill-conditioning of $\Lm$.

\subsection{Data simulation}
\paragraph{Head model and forward operator}
All experiments rely on synthetic EEG data in order to ensure the availability of ground-truth source activity and enable quantitative evaluation of the inverse solutions. A common head model and forward operator are used for all datasets. The head model is based on the \textit{fsaverage} MRI template from FreeSurfer, with the cortical surface subsampled into $994$ source locations, following the protocol described in~\cite{sun2022deep}. EEG measurements are simulated using the \textit{standard\_1020} electrode montage with $90$ sensors provided by MNE-Python~\cite{gramfort2013meg}.

The leadfield matrix $\Lm \in \mathbb{R}^{90 \times 994}$ is computed using BEM with three compartments (brain, skull, and scalp) and conductivity values $(0.3,\,0.006,\,0.3)\,\si{\siemens\per\meter\squared}$. This forward model is shared across all simulation settings, while the source temporal dynamics are generated using two different approaches described below.

\paragraph{SEREEGA-based simulations}
The first dataset is generated using the SEREEGA toolbox~\cite{krol2018sereega}, following simulation protocols commonly used in learning-based ESI studies~\cite{sun2022deep}. Source activity consists of single extended cortical sources with Gaussian temporal waveforms. For each sample, a spatially contiguous source patch is formed by aggregating neighboring dipoles around a randomly selected seed, with amplitudes decreasing as a function of geodesic distance. Temporal waveforms follow the Gaussian parametrization implemented in SEREEGA, with parameters drawn from uniform distributions as defined in~\cite{reynaud2024comprehensive}. Each simulated signal contains $64$ time samples. A total of $10{,}000$ samples are generated and split into training and validation sets using an $80\%/20\%$ ratio.

\paragraph{Neural mass model simulations}
To evaluate the proposed method under more realistic and complex source dynamics, a second dataset is generated using a neural mass model (NMM) framework, following the simulation setup described in~\cite{reynaud2025thesis}. In this setting, cortical activity is generated by coupled neural population models, producing physiologically plausible oscillatory signals through nonlinear interactions between excitatory and inhibitory subpopulations.

Spatial source configurations are generated in the same manner as in the SEREEGA dataset, using localized extended patches of neighboring dipoles, ensuring consistent spatial characteristics across datasets. In contrast to the Gaussian waveforms used in SEREEGA, temporal dynamics are governed by the underlying NMM equations, resulting in more complex, nonstationary time courses. Each simulated signal contains $500$ time samples. As with the SEREEGA dataset, $10{,}000$ samples are generated and split into training and validation sets using an $80\%/20\%$ ratio.

Together, these two datasets provide complementary testbeds for evaluating MM-Net: SEREEGA offers a standardized benchmark with controlled temporal structure, while the NMM simulations probe performance under biologically plausible and highly complex dynamics.

\subsection{Algorithm configuration}

For reproducible evaluations, spatio-temporal data matrices of shape $m \times t$ are flattened via vectorization ($\zv = \operatorname{vec}(\Zm) \in \R^{mt}$) to strictly ensure compatibility with our formulation. The meta-loss~\eqref{eq:meta_loss} is evaluated on these vectors and optimized using Adam~\cite{Kingma2015Adam} for $J=500$ epochs with learning rate $\eta = \num{5e-5}$. For the lower-level solver, we set $I=10$ inner iterations to balance computational cost and training resolution. The step-size relaxation is fixed at $\gamma = 1.999$ to preserve parameter-free monotonic descent. To test robustness against poor initializations, the initial state is drawn from a scaled Gaussian, $\Xm^0 \sim 10^{-3}\mathcal{N}(0,1)$, and subsequently vectorized. Finally, the lower-level energy $u$~\eqref{eq:cosine_loss} equally weights the cosine data fidelity and structural prior ($\lambda = 1.0$).

\section{Evaluation}\label{sec:resul}
\subsection{Evaluation metrics}
Following~\cite{reynaud2024comprehensive}, we use five metrics to evaluate the performance of the algorithm on the ESI task: the localization error (LE), the area under the ROC curve (AUC), the normalized mean squared error (nMSE), the peak signal to noise ratio (PSNR) and the time error (TE). The LE indicates how well the method can localize the active source. The AUC is used to quantify the ability to estimate the source extension. The nMSE and PSNR are used to evaluate amplitude reconstruction, and the TE measures the ability of the method to reconstruct the waveform precisely.

\subsection{Results}
To rigorously isolate our core contribution, we evaluate MM-Net against baselines strictly within deterministic, primal-domain optimization, purposefully excluding generative or dual-space methods. Our selection spans three paradigms: (1) \textit{Classical Anchors}: MNE~\cite{hamalainen1994mne} and sLORETA~\cite{pascual2002sloreta} provide established mathematical reference points from classical optimization. (2) \textit{Learned Preconditioning}: Meta-Curvature (MC)~\cite{Park2019MetaCurvature} and ModGrad~\cite{Simon2020ModGrad} represent unconstrained data-driven preconditioners, acting as direct predecessors to our neural optimizer.\footnote{For MC and ModGrad, we reuse the architectures of their learned preconditioners and integrate them into a standard gradient-descent scheme, with careful hyperparameter tuning to ensure competitive performance.} (3) \textit{Unrolled Variational Solvers}: Rather than evaluating generic PnP networks, we select 4DVarNet~\cite{Fablet2021FourDVarNet}. By unrolling a L2O solver coupled with an autoencoder prior, it provides the most architecturally aligned comparison. Finally, \textit{Relaxed MM-Net (R-MM-Net)} serves as an internal stress test by removing the projection mechanism associated with the MM constraints in~\eqref{eq:P_clip}. In the proposed \textit{MM-Net}, the learned diagonal majorization matrix is projected onto analytically derived lower and upper bounds to guarantee the validity of the MM surrogate and preserve the associated stability properties. In contrast, \textit{R-MM-Net} removes this projection step entirely, which is equivalent to projecting onto $(-\infty,+\infty)$ and therefore imposing no constraint on the learned matrix. By comparing both variants, we explicitly evaluate the importance of controlling the learned curvature matrix so that it satisfies the mathematical properties required by the MM framework, and empirically demonstrate that these analytical guardrails are essential for stable domain generalization.

Table~\ref{tab:dataset_results} presents a comprehensive quantitative evaluation across the SEREEGA and NMM test sets. Rather than treating training configurations in isolation, we systematically group methods into classical, within-domain, and cross-domain approaches. This structure allows us to assess not only the absolute reconstruction accuracy under matched conditions but also the models' robustness to significant shifts in temporal dynamics and underlying data generation processes.

\paragraph{Within-domain performance}
The proposed method consistently achieves the best overall performance across all metrics. The learning-based methods outperform the classical approaches (MNE and sLORETA), validating the data-driven paradigm. On SEREEGA, our approach substantially improves LE and nMSE, while yielding higher PSNR and AUC than all baselines. On the more challenging NMM dataset, which features complex and nonstationary temporal dynamics, our method again delivers the lowest LE and nMSE, as well as the highest PSNR. Although MC attains a slightly higher AUC in this setting, it does so at the cost of significantly larger TE, whereas our method achieves a more favorable accuracy-stability trade-off. Furthermore, MM-Net consistently outperforms R-MM-Net ablation on both datasets, confirming that strictly enforcing the rigorous majorization conditions yields empirical accuracy gains.

\paragraph{Cross-domain generalization}
Cross-domain evaluation reveals a more pronounced separation between methods. In both transfer configurations, MM-Net consistently outperforms both 4DVarNet and R-MM-Net across all metrics. Notably, the widening performance gap between MM-Net and the R-MM-Net, proves that theoretical grounding enhances out-of-distribution robustness. In particular, we observe gains in LE, nMSE, and PSNR, indicating that MM-Net generalizes more robustly across datasets with different temporal resolutions and underlying generative processes. These results confirm that MM-Net captures transferable optimization geometry, rather than overfitting to dataset-specific statistics.

MC and ModGrad are excluded from the cross-domain experiments because their learned update rules require a fixed temporal dimensionality, preventing transfer across domains with varying resolutions. This limitation highlights a key advantage of our MM-based formulation, which is inherently agnostic to sequence length.

\paragraph{Qualitative visual evaluation}
To complement the quantitative metrics, Figure~\ref{fig:cortex} visualizes the reconstructed source maps for a representative test sample. While classical and deep-learning baselines suffer from severe spatial leakage and spurious activations, our MM-Net produces highly focal reconstructions. Furthermore, while R-MM-Net and MM-Net yield comparable results, explicitly enforcing the majorant conditions ensures strict theoretical stability without compromising this high spatial fidelity.

\paragraph{Convergence dynamics}
As illustrated in Figure~\ref{fig:lw_lv_cost}, while the MM principle theoretically guarantees the monotonic decrease of the total objective, our proposed MM-Net also exhibits highly stable convergence across both fidelity and regularization components. Unlike ModGrad and 4DVarNet, which suffer from significant initial spikes or slow adaptation, MM-Net maintains a consistent descent trajectory. This empirical stability confirms that the analytical MM guardrails effectively prevent the learned optimizer from exploring divergent or unstable update regions.

\paragraph{Discussion}
Across all evaluation scenarios, MM-Net consistently excels in accuracy, robustness, and stability. By significantly outperforming classical baselines while maintaining mathematical guarantees, our framework successfully bridges the gap between classical optimization and deep learning. The strong cross-dataset performance suggests that learning a structured curvature majorant, rather than a full update rule, yields better generalization in ill-posed inverse problems. Overall, these results confirm that the proposed approach combines the empirical effectiveness of learned optimization with the theoretical robustness of MM, establishing a favorable trade-off between performance and reliability.

\section{Conclusion}\label{sec:conclu}
This work introduces a learned MM framework for ill-posed inverse problems via bilevel optimization. Instead of fully unrolled rules, our approach learns a structured curvature majorant that governs each MM step, while rigorously preserving the descent guarantees of classical MM algorithms. The majorant is parameterized by a lightweight recurrent architecture that adapts to the current iterate and its gradient, thereby combining the transparency and stability of surrogate-based optimization with the flexibility of data-driven learning.

Theoretically, we derived explicit curvature bounds for cosine-similarity losses, leading to simple diagonal quadratic majorants and valid MM updates despite nonlinear learned representations. This benefits scale-invariant applications where recovering the exact signal phase and structural morphology is fundamentally more important than absolute amplitudes. Together, these theoretical foundations yield a provably convergent lower-level solver and guarantee convergence to stationary points within the bilevel framework.

Experimental results on ESI demonstrate that MM-Net consistently outperforms deep-unrolled and meta-learning baselines in accuracy, quality, and robustness. In particular, MM-Net exhibits strong cross-domain generalization across diverse temporal dynamics, regardless of the spatial forward operator. This highlights the benefit of learning transferable curvature models rather than domain-specific update rules. Beyond ESI, this framework extends naturally to other inverse problems. By coupling the learned MM solver with task-specific priors, it provides a principled, transferable approach for broad ill-posed challenges, particularly in scale-invariant applications.

\section*{Acknowledgment}
We utilized Google Gemini and ChatGPT for copy-editing and grammatical polishing throughout to enhance readability. The core scientific content, including all derivations and experimental analyses, remains our original work.

\appendices

\section{Proof of Boundedness of the Objective Function}\label{app:cossim_bounds}
This appendix provides the bounds for the lower objective function~\eqref{eq:cosine_loss}. We first establish a universal spectral bound, then apply this result to derive local majorant matrices for both the data term $\CosSim(\Lm\xv, \yv)$ and the prior term $\CosSim(\xv, \Phi(\xv;\thetav))$.

\subsection*{General Bound of the Projection Metric}
Consider the standard cosine similarity between two vectors $\av, \bv \in \R^{*s}$ defined on a bounded convex subset. The convexity ensures that the line segment $[\av, \bv]$ remains in a region with a bounded Hessian, satisfying the sufficient condition for Lipschitz continuity of the gradient. Let $\uv = \frac{\av}{\|\av\|}$ and $\vv = \frac{\bv}{\|\bv\|}$ be the normalized vectors such that $\CosSim(\av, \bv) = \uv^\top \vv$.

The Jacobians of the $L_2$ normalizations are symmetric, orthogonal projection matrices:
\begin{equation*}
    \Pim_\av = \nabla_\av \uv = \frac{1}{\|\av\|}(\Id - \uv\uv^\top) \; \text{and} \; \Pim_\bv = \nabla_\bv \vv = \frac{1}{\|\bv\|}(\Id - \vv\vv^\top).
\end{equation*}

We compute the first and second derivatives of $\CosSim(\av, \bv)$ with respect to $\av$:
\begin{align*}
    \nabla_\av \CosSim(\av, \bv) &= \Pim_\av \vv = \frac{1}{\|\av\|} (\vv - (\uv^\top \vv)\uv)\\
    \nabla_\av^2 \CosSim(\av, \bv) &= \nabla_\av \left( \frac{1}{\|\av\|} \right) \big(\vv - (\uv^\top \vv)\uv\big)^\top\\
    &\quad +\frac{1}{\|\av\|} \nabla_\av \big(\vv - (\uv^\top \vv)\uv\big) \\
    &= -\frac{1}{\|\av\|^2} \uv \big(\vv^\top - (\uv^\top \vv)\uv^\top\big)\\
    &\quad -\frac{1}{\|\av\|} \Big( \uv (\nabla_\av(\uv^\top \vv))^\top + (\uv^\top \vv) \nabla_\av \uv \Big)\\
    &= \frac{1}{\|\av\|^2} \Mm(\uv, \vv),
\end{align*}
where $\Mm(\uv, \vv) = -(\vv^\top \uv)\Id - \vv\uv^\top - \uv\vv^\top + 3(\vv^\top \uv)\uv\uv^\top$ is the symmetric core matrix.

To find the spectral norm $\|\Mm\|_2$, we determine its eigenvalues. For any vector $\zv$ orthogonal to both $\uv$ and $\vv$, we have $\Mm \zv = -(\vv^\top \uv)\zv$. Thus, $-(\vv^\top \uv)$ is an eigenvalue. 

The remaining non-trivial eigenvectors lie in the 2D subspace spanned by $\{\uv, \vv\}$. We can construct an orthonormal basis $\{\uv, \uv^\perp\}$ for this subspace, where $\vv = c \uv + s \uv^\perp$, with $c = \vv^\top \uv \in [-1, 1]$ and $s = \sqrt{1 - c^2}$. In this basis, the action of $\Mm$ can be represented by the $2 \times 2$ matrix:
\begin{equation*}
    \begin{pmatrix}
        0 & -s \\
        -s & -c
    \end{pmatrix}
\end{equation*}
The characteristic equation of this matrix is $\lambda^2 + c\lambda - s^2 = 0$. Solving for $\lambda$, the two eigenvalues in this subspace are:
\begin{equation*}
    \lambda_{1,2} = \frac{-c \pm \sqrt{c^2 + 4s^2}}{2} = \frac{-c \pm \sqrt{4 - 3c^2}}{2}.
\end{equation*}

The spectral norm $\|\Mm\|_2$ is the maximum absolute eigenvalue. We maximize the function $g(c) = \frac{|c| + \sqrt{4 - 3c^2}}{2}$ over the interval $c \in [-1, 1]$. By setting the derivative to zero, the maximum occurs at $c = \pm \frac{1}{\sqrt{3}}$, yielding:
\begin{equation}\label{eq:core_spectral_bound}
    \|\Mm\|_2 = g\left(\pm \frac{1}{\sqrt{3}}\right) = \frac{2}{\sqrt{3}} \Rightarrow \nabla_\av^2 \CosSim(\av, \bv) \preceq \frac{2}{\sqrt{3}\|\av\|^2} \Id.
\end{equation}

This bound forms the basis for constructing majorants.
\qed

\subsection{Majorant Bound for the Data Term}
Let $\av = \Lm\xv$ and $\bv = \yv$. The first and second derivatives of data term $\CosSim(\Lm\xv, \yv)$ with respect to $\xv$ are:
\begin{align*}
    \nabla_\xv \CosSim(\Lm\xv, \yv) &= \Lm^\top \Pim_{\Lm\xv} \vv,\\
    \nabla_\xv^2 \CosSim(\Lm\xv, \yv) &= \Lm^\top \big(\nabla_\av^2 \CosSim(\av, \yv)\big) \Lm.
\end{align*}

Substituting the bound from Equation~\eqref{eq:core_spectral_bound}, we construct the majorant bound for the Data Term $\CosSim(\Lm\xv, \yv)$:
\begin{equation}\label{eq:b_data}
    \Bm_{data}(\xv) = \frac{2}{\sqrt{3}\|\Lm\xv\|^2} \Lm^\top \Lm.
\end{equation}

For efficient element-wise updates, we relax $\Bm_{data}(\xv)$ into a uniform diagonal majorant bound, guaranteeing monotone descent while completely avoiding matrix inversions. Using the spectral bound $\Lm^\top \Lm \preceq \|\Lm\|_2^2 \Id$, we obtain:
\begin{equation}\label{eq:b_data_diag}
    \widetilde{\Bm}_{data}(\xv) = \frac{2 \|\Lm\|_2^2}{\sqrt{3}\|\Lm\xv\|^2} \Id.
\end{equation}

\textbf{Boundedness Guarantee:} To prevent singularities given $\Lm$'s non-trivial null space, we use Tikhonov regularization $\epsilon \Id$ ($\epsilon > 0$), such that $\|\Lm\xv\|_{\epsilon}^2 = \xv^\top (\Lm^\top \Lm + \epsilon \Id) \xv \ge \epsilon \|\xv\|^2$. For $\xv \neq \mathbf{0}$, this bounds the denominator away from zero, ensuring $\Bm_{data}(\xv)$ remains globally bounded.

\subsection{Majorant Bound for the Prior Term}
Let $\av = \xv$ and $\bv = \Phi(\xv;\thetav)$. Let $\Jm_\Phi = \nabla_\xv \Phi(\xv;\thetav)$ be the Jacobian of $\Phi(\xv;\thetav)$. The first and second derivatives of the prior term $\CosSim(\xv, \Phi(\xv;\thetav))$ with respect to $\xv$ are:
\begin{align*}
    \nabla_\xv \CosSim(\xv, \Phi(\xv;\thetav)) &= \Pim_\xv \vv + \Jm_\Phi^\top \Pim_\Phi \uv,\\
    \nabla_\xv^2 \CosSim(\xv, \Phi(\xv;\thetav)) &= (\nabla_\xv \Pim_\xv) \vv + \Pim_\xv (\nabla_\xv \vv) + \Jm_\Phi^\top \nabla_\xv (\Pim_\Phi \uv)\\
    &\; +\sum_{i=1}^s (\Pim_\Phi \uv)_i \big[\nabla_\xv^2 \Phi(\xv; \thetav)\big]_i\\
    &= \Hm_\xv + \Pim_\xv \Pim_\Phi \Jm_\Phi + \Jm_\Phi^\top \Hm_\Phi \Jm_\Phi\\
    &\; + \Jm_\Phi^\top \Pim_\Phi \Pim_\xv +\sum_{i=1}^s (\Pim_\Phi \uv)_i \big[\nabla_\xv^2 \Phi(\xv; \thetav) \big]_i.
\end{align*}

\paragraph{Gauss-Newton curvature}
Let the Gauss-Newton curvature $\Hm_{GN}(\xv)$ contain only first-order network derivatives. The Jacobian of the network is denoted by $\Jm_\Phi = \nabla_\xv \zv^N \in \R^{s \times s}$, which can be evaluated via the forward recursion defined in Appendix~\ref{app:phi_residual}. The Gauss-Newton term expands to:
\begin{equation*}
    \Hm_{GN}(\xv) = \Hm_\xv + \Pim_\xv \Pim_\Phi \Jm_\Phi + \Jm_\Phi^\top \Pim_\Phi \Pim_\xv + \Jm_\Phi^\top \Hm_\Phi \Jm_\Phi,
\end{equation*}
where $\Hm_\xv$ and $\Hm_\Phi$ are the metric Hessians computed in Equation~\eqref{eq:core_spectral_bound}. Applying the bound to the blocks yields:
\begin{equation*}
    \Hm_\xv\preceq \frac{2}{\sqrt{3}\|\xv\|^2}\Id \quad \text{and} \quad \Hm_\Phi \preceq \frac{2}{\sqrt{3}\|\Phi(\xv;\thetav)\|^2}\Id.
\end{equation*}

To bound the indefinite cross-term, we apply the weighted Young's inequality $\Am^\top \Bm + \Bm^\top \Am \preceq \beta \Am^\top \Am + \frac{1}{\beta} \Bm^\top \Bm$ with a scalar $\beta > 0$. Setting $\Am = \Pim_\xv$ and $\Bm = \Pim_\Phi \Jm_\Phi$, and applying the expanded idempotence property of the projection matrices, we construct the Gauss-Newton majorant bound:
\begin{equation}\label{eq:b_gn}
    \begin{aligned}
        \Bm_{GN}(\xv) &= \left( \frac{2}{\sqrt{3}\|\xv\|^2} \Id + \frac{\beta}{\|\xv\|} \Pim_\xv \right)\\
        &+ \Jm_\Phi^\top \left( \frac{2}{\sqrt{3}\|\Phi(\xv;\thetav)\|^2} \Id + \frac{1}{\beta \|\Phi(\xv;\thetav)\|} \Pim_\Phi \right) \Jm_\Phi.
    \end{aligned}
\end{equation}

For efficient element-wise updates, we relax $\Bm_{GN}(\xv)$ into a uniform diagonal majorant bound. Applying the projection bounds $\Pim_\xv \preceq \frac{1}{\|\xv\|}\Id$, $\Pim_\Phi \preceq \frac{1}{\|\Phi(\xv;\thetav)\|}\Id$, and the Jacobian spectral bound $\Jm_\Phi^\top \Jm_\Phi \preceq \big( \prod_{k=1}^{N} \|\Wm^k\|_2^2 \big) \Id$, we obtain:
\begin{equation}\label{eq:b_gn_diag}
    \begin{aligned}
        \widetilde{\Bm}_{GN}(\xv) &= \Bigg[ \left( \frac{2 + \sqrt{3} \beta}{\sqrt{3}} \frac{1}{\|\xv\|^2} \right) \\
        &\quad+ \left( \prod_{k=1}^{N} \|\Wm^k\|_2^2 \right) \left( \frac{2 \beta + \sqrt{3}}{\sqrt{3} \beta} \frac{1}{\|\Phi(\xv;\thetav)\|^2} \right) \Bigg] \Id.
    \end{aligned}
\end{equation}

Analytically, the tightest diagonal bound is achieved by selecting the optimal $\beta$ that minimizes the trace of $\widetilde{\Bm}_{GN}(\xv)$.

\paragraph{Residual curvature}
Let the residual network curvature $\Hm_{res}(\xv)$ contain second-order network derivatives in $\nabla_\xv^2 \CosSim(\xv, \Phi(\xv;\thetav))$. The residual curvature term expands to:
\begin{equation*}
    \Hm_{res}(\xv) = \sum_{i=1}^s \big(\Pim_\Phi \uv\big)_i \big[\nabla_\xv^2 \Phi(\xv;\thetav)\big]_i.
\end{equation*}

Evaluating this residual Hessian requires prohibitive second-order backpropagation through the deep network. To address this bottleneck, we analyze its layer-wise structure and derive its structural and spectral majorant bounds in Appendix~\ref{app:phi_residual}.

To maximize computational efficiency and enable fully element-wise updates, we construct a uniform diagonal majorant bound for the entire objective~\eqref{eq:cosine_loss}. By aggregating the individual relaxed bounds $\widetilde{\Bm}_{data}(\xv)$~\eqref{eq:b_data_diag}, $\widetilde{\Bm}_{GN}(\xv)$~\eqref{eq:b_gn_diag}, and $\widetilde{\Bm}_{res}(\xv)$~\eqref{eq:b_res_diag}, the total majorant is given by:
\begin{equation}
    \widetilde{\Bm}_{total}(\xv) = \widetilde{\Bm}_{data}(\xv) + \widetilde{\Bm}_{GN}(\xv) + \widetilde{\Bm}_{res}(\xv).
\end{equation}
\qed
 
\section{Residual Majorant for the Autoencoder}\label{app:phi_residual}

We model the $N$-layer autoencoder $\Phi: \R^s \to \R^s$ as a sequence of transformations. Let $\zv^1 = \xv$ and $\Phi(\xv;\thetav) = \zv^N$. For layers $l = 1, \dots, N-1$, the pre- and post-activations are:
\begin{equation*}
    \av^l = \sigma(\zv^l) \; \text{and} \; \zv^{l+1} = \Wm^l \av^l + \bv^l,
\end{equation*}
where $\Wm^l$ and $\bv^l$ are the layer weights and biases. We specify the activation $\sigma$ as the Soft-ReLU function, applied element-wise. Its explicit form and derivatives are given by:
\begin{equation*}
    \begin{aligned}
    \sigma(z) &= \frac{1}{2}\left(z + \sqrt{z^2 + \varepsilon^2}\right), \quad \varepsilon>0, \\
    \sigma'(z) &= \frac{1}{2}\left(1 + \frac{z}{\sqrt{z^2+\varepsilon^2}}\right) \in (0,1), \\
    \sigma''(z) &= \frac{1}{2}\frac{\varepsilon^2}{(z^2+\varepsilon^2)^{3/2}} \le \frac{1}{2\varepsilon}.
    \end{aligned}
\end{equation*}

We define the backward effective weights as:
\begin{align*}
    \vv_{res}^l &= \nabla_{\av^l} \Big(\big( \Pim_\Phi \uv \big)^\top \Phi(\xv;\thetav) \Big)\\
    &=
    \begin{cases}
        (\Wm^l)^\top \big(\Pim_\Phi \uv\big), & \text{if } l =  N - 1,\\
        (\Wm^l)^\top \left( \sigma'(\zv^{l+1}) \odot \vv_{res}^{l+1} \right), & \text{otherwise}.
    \end{cases}
\end{align*}

Applying the multivariable chain rule over the network layers, the exact residual curvature factorizes into:
\begin{align*}
    \Hm_{res}(\xv) &= \sum_{i=1}^s \big(\Pim_\Phi \uv\big)_i \big[\nabla_\xv^2 \Phi(\xv;\thetav)\big]_i\\
    &= \sum_{l=1}^{N-1} \big(\nabla_\xv \zv^l\big)^\top \diag\Big(\vv_{res}^l \odot \sigma''\big(\zv^l\big)\Big) \nabla_\xv \zv^l.
\end{align*}

Since $\Am \preceq |\Am|$ strictly holds for diagonal matrices, we can bound the exact Hessian $\Hm_{res}(\xv)$ by taking the element-wise absolute value. Noting that $\sigma''(\zv) > 0$, we define the structural majorant bound $\Bm_{res}(\xv) \succeq \Hm_{res}(\xv)$ explicitly as:
\begin{equation}\label{eq:b_res}
    \Bm_{res}(\xv) = \sum_{l=1}^{N-1} \big(\nabla_\xv \zv^l\big)^\top \diag\Big(\big|\vv_{res}^l\big| \odot \sigma''\big(\zv^l\big)\Big) \nabla_\xv \zv^l,
\end{equation}
where the forward Jacobian is defined recursively:
\begin{equation*}
    \nabla_{\xv} \zv^l =
    \begin{cases}
        \Id, & \text{if } l = 1, \\
        \Wm^{l-1} \diag\big(\sigma'(\zv^{l-1})\big) \nabla_{\xv} \zv^{l-1}, & \text{for } l > 1.
    \end{cases}
\end{equation*}

For efficient element-wise updates, we relax $\Bm_{res}(\xv)$ into a uniform diagonal majorant bound. Applying the triangle inequality and the sub-multiplicativity of the spectral norm $\|\cdot\|_2$, we obtain the relaxed scalar spectral bound:
\begin{equation}\label{eq:b_res_diag}
    \widetilde{\Bm}_{res}(\xv) = \left( \sum_{l=1}^{N-1} \left( \prod_{k=1}^{l-1} \|\Wm^k\|_2^2 \right) \big\| \vv_{res}^l \odot \sigma''(\zv^l) \big\|_\infty \right) \Id.
\end{equation}

In practice, the weight spectral norms $\|\Wm^k\|_2^2$ are constants that can be precomputed during training. Consequently, evaluating the residual bound at each iteration merely requires computing the infinity norm of the intermediate vectors, thus reducing the computational complexity.
\qed

\bibliographystyle{IEEEtran}
\bibliography{biblio}
\end{document}